%%%%%%%%%%%%%%%%%%%%%%% file template.tex %%%%%%%%%%%%%%%%%%%%%%%%%
%
% This is a general template file for the LaTeX package SVJour3
% for Springer journals.          Springer Heidelberg 2010/09/16
%
% Copy it to a new file with a new name and use it as the basis
% for your article. Delete % signs as needed.
%
% This template includes a few options for different layouts and
% content for various journals. Please consult a previous issue of
% your journal as needed.
%
%%%%%%%%%%%%%%%%%%%%%%%%%%%%%%%%%%%%%%%%%%%%%%%%%%%%%%%%%%%%%%%%%%%
%
% First comes an example EPS file -- just ignore it and
% proceed on the \documentclass line
% your LaTeX will extract the file if required
% [arxiv_v2: filecontents  stripped, 188 chars]
\RequirePackage{fix-cm}
\documentclass[smallextended]{svjour3}       % onecolumn (second format)
\smartqed  % flush right qed marks, e.g. at end of proof

\usepackage[T1]{fontenc}
\usepackage{amsmath}
\usepackage{amsfonts,amssymb,epsfig,multicol,multirow}
\usepackage{rotating}
\usepackage{float}
\usepackage{array}

%
% \usepackage{mathptmx}      % use Times fonts if available on your TeX system
%
% insert here the call for the packages your document requires
%\usepackage{latexsym}
% etc.
%
% please place your own definitions here and don't use \def but
% \newcommand{}{}
%
% Insert the name of "your journal" with
% \journalname{myjournal}
%
\begin{document}
\newcommand\myeq{\stackrel{\mathclap{\normalfont\mbox{\tiny{def}}}}{=}}
\newcommand{\FF}{\mbox{$\mathbb{F}$}}
\newcommand{\m}{\mbox{$\mathbf{m}$}}
\newcommand{\e}{\mbox{$\mathbf{e}$}}
\newcommand{\ob}{\mbox{$\overline{\omega}$}}
\newcommand{\om}{\mbox{$\omega$}}
\newcommand{\Tr}{\mbox{Tr}}
\newcommand{\C}{\mbox{$\cal C$}}
\newcommand{\Cperb}{\mbox{$\C^\bot$}}
\newcommand{\ben}{\begin{equation*}}
\newcommand{\een}{\end{equation*}}
\renewcommand{\thefootnote}{\fnsymbol{footnote}}

\title{McNie2-Gabidulin: An improvement of McNie public key encryption using Gabidulin code \thanks{The work of Jon-Lark Kim was supported by Samsung Science and Technology Foundation under
Project Number SSTF-BA1602-01.}
}
%\subtitle{Do you have a subtitle?\\ If so, write it here}

\titlerunning{McNie2-Gabidulin: An improvement of McNie}        % if too long for running head

\author{Jon-Lark Kim  \and  Young-Sik Kim  \and  Lucky Galvez  \and  Myeong Jae Kim
}

%\authorrunning{Short form of author list} % if too long for running head

\institute{J.L. Kim, L. Galvez, M.J. Kim \at
             Department of Mathematics, Sogang University, Seoul 04107, South Korea       \\
             \email{ jlkim@sogang.ac.kr, legalvez97@gmail.com,
device89@snu.ac.kr}           %  \\
%             \emph{Present address:} of F. Author  %  if needed
           \and
           Y.S. Kim \at
              Department of Information and Communication Engineering,
Chosun University, Gwangju 61452, South Korea \\
\email{ iamyskim@chosun.ac.kr}
}

\date{Received: date / Accepted: date}
% The correct dates will be entered by the editor

\maketitle

\begin{abstract}
McNie is a code-based public key encryption scheme submitted as a candidate to the NIST Post-Quantum Cryptography standardization \cite{pqc}. In this paper, we present McNie2-Gabidulin, an improvement of McNie. By using Gabidulin code, we eliminate the decoding failure, which is one of the limitations of the McNie public key cryptosystem that uses LRPC codes. We prove that this new cryptosystem is IND-CPA secure. Suggested parameters are also given which provides low key sizes compared to other known code based cryptosystems with zero decryption failure probability.
\keywords{McNie \and  Gabidulin code \and public key encryption}
% \PACS{PACS code1 \and PACS code2 \and more}
\subclass{94B05 \and 94A60}
\end{abstract}

\section{Introduction}
McNie~\cite{pqc} is a code-based public key encryption (PKE) scheme based on the McEliece and Niederreiter cryptosystems. It was designed to be secure against known structural attacks against code-based cryptosystems. A random generator matrix is used as part of the public key which does not give any information on the private key. This random matrix is also used to mask the private key so the result is a more random matrix, rather than a parity check matrix of an equivalent code.

However, Gaborit \cite{PQCcomment} suggested a message-recovery attack which reduced the size of the random matrix. Based on this attack and an improvement of the complexity of the ISD on rank-metric codes \cite{newGRS}, the security level of McNie decreased by almost a factor of 2. For example, the parameters submitted using 4-quasi-cyclic LRPC codes for 128-bit security (NIST Category 1) can be attacked using this improved ISD with a complexity of only around $2^{98}$. So a new set of parameters were suggested which resulted in larger public key sizes. In addition, LRPC decoding is a probabilistic decoding algorithm and hence, the original parameters suggested for McNie suffer from relatively high decryption failure probability. Therefore, it is desirable to modify McNie in order to avoid Gaborit's message-recovery attack.

 In \cite{McNieKRA}, Lau, et al proposed the use of Gabidulin codes in the McNie PKE setting in order to address the issue of decryption failure. Gabidulin codes are rank metric codes that have good structures and an efficient decoding algorithm with no decoding failure. Although the McEliece cryptosystem based on Gabidulin codes is already broken by Overbeck \cite{Overbeck}, Lau, et al \cite{McNieKRA} showed that Gabidulin codes are secure to use in the McNie setting. However, their reparation is still vulnerable to the message recovery attack by Gaborit \cite{PQCcomment}. In this paper, we propose a modification of McNie which we call McNie2, which we will show to be just a generalization of McNie. We believe that this modification results in a much stronger security against known message-recovery attacks including Gaborit's attack. Moreover, McNie2 using Gabidulin code is also shown to be secure against Overbeck's attack. Furthermore, the suggested theoretical parameters achieve the lowest known key sizes for code-based public key cryptosystems without decryption failure probability.

This paper is organized as follows. We begin by discussing some preliminary concepts in rank metric codes and in particular Gabidulin codes in Section 2. Next, we introduce the McNie public key cryptosystem as it was originally submitted to the NIST Post-Quantum Cryptography standardization and a message recovery attack that compromises the security of the original proposed parameters. In Section 4, we present McNie2, a reparation of McNie which avoids the said message recovery attack. In this variant, we use Gabidulin codes to take advantage of the zero-decoding failure probability. We also show that this proposed system achieves IND-CPA security. We follow this with our proposed parameter and compare it to the original proposed parameters. Finally, we conclude with the advantages and some limitations of our new proposed scheme.

%\section{Preliminaries}
%\begin{definition}
%Let $H = [h_{ij}]$ be an $m \times n$ matrix such that $h_{ij} \in \mathbb{F}_{q^m}$. The support of $H$ is the sub-vector space of $\mathbb{F}_{q^m}$ over $\mathbb{F}_q$ spanned by the $h_{ij}$'s.
%\end{definition}

%\begin{definition}
%A  Low Rank Parity Check (LRPC) code of rank $d$, is an  $[n,k]$ code over $\mathbb{F}_{q^m}$  that has for its parity check matrix an $(n-k)\times n$ matrix $H=(h_{ij})$ such that the sub-vector space of $\mathbb{F}_{q^m}$ generated by its coefficients $h_{ij}$ has dimension at most $d$. We call this dimension the weight of $H$. Letting $F$ be the sub-vector space of $\mathbb{F}_{q^m}$ generated by the coefficients $h_{ij}$ of $H$, we denote one of its bases by $\{F_1, F_2, \ldots, F_d\}$.
%\end{definition}

%Simply speaking, the parity check matrix $H$ of an LRPC code has support of dimension at most $d$, i.e., $Supp(H) =F=\left< F_1, F_2, \ldots, F_d \right>$.
\section{Preliminaries}
In this section, we explain the necessary background for rank metric code and Gabidulin code.

\subsection{Rank metric codes}
We begin by defining the rank metric codes or simply, rank codes. Essentially, rank  codes are linear codes equipped with the rank metric, instead of the usual Hamming metric. There are two representations of rank codes, as we will see later, which  are actually related. One of them was first introduced by Delsarte in  the $matrix$ $representation$  originally as a bilinear form \cite{Del}. The other,  the $vector$ $representation$,  was  introduced by Gabidulin in his seminal paper \cite{Gab}.

\begin{definition}
$Rank$ $codes$  in $matrix$ $representation$ are subsets of the normed space $\{\mathbb{F}_{q}^{N\times n}, Rk\}$ of $N\times n$ matrices over a finite (base) field $\mathbb{F}_q$, where the norm of a matrix $M \in \mathbb{F}_q^{N\times n}$ is defined to be its algebraic rank $Rk(M)$ over $\mathbb{F}_q$. The $rank$ $distance$ $d_R(M_1,M_2)$ between two matrices $M_1$ and $M_2$ is the rank of their difference, i.e., $d_R(M_1,M_2) =Rk(M_1-M_2)$. The $rank$ $distance$ of a matrix rank code $\mathcal{M}\subset \mathbb{F}_q^{N\times n}$ is defined as the minimal pairwise distance:
$$d(\mathcal{M})=d=min(Rk(M_i - M_j) : M_i, M_j \in \mathcal{M}, i\neq j).$$
\end{definition}

\begin{definition}
$Rank$ $codes$ in $vector$ $representation$ are defined as subsets of the normed $n$-dimensional space $\{\mathbb{F}_{q^N}^{n}, Rk\}$ of length $n$ vectors over an extension field $\mathbb{F}_{q^N}$ of $\mathbb{F}_q$, where the norm of a vector $\bold{v}\in \mathbb{F}_{q^N}^{n}$ is defined to be the $column$ rank $Rk(\bold{v}|\mathbb{F}_q)$ of this vector over $\mathbb{F}_q$, i.e., the maximal number of coordinates of $\bold{v}$ which are linearly independent over the base field $\mathbb{F}_q$. The $rank$ $distance$ between two vectors $\bold{v}_1$, $\bold{v}_2$ is the column rank of their difference $Rk(\bold{v}_1-\bold{v}_2 | \mathbb{F}_q)$. The $rank$ $distance$ of a vector rank code $\mathcal{V}\subset \mathbb{F}_{q^N}^n$ is defined as the minimal pairwise distance:
$$d(\mathcal{V})=d=min(Rk(\bold{v}_i - \bold{v}_j) : \bold{v}_i, \bold{v}_j \in \mathcal{V}, i\neq j).$$
\end{definition}

Notice that for a given basis $\beta = \{ \beta_1, \beta_2, \ldots, \beta_N \}$ of $\mathbb{F}_{q^N}$ over $\mathbb{F}_q$, every vector $\bold{v}=(v_1,v_2,\ldots,v_n)  \in \mathbb{F}_{q^N}^n$ corresponds to a matrix $\bar{\bold{v}}$ whose $i^{th}$ column consists of the coefficients when $v_i$ is written in terms of the basis $\beta$. Moreover, $Rk(\bold{v})=Rk(\bar{\bold{v}})$ and this is independent of the chosen basis. Therefore, every rank code $C$ in vector representation can be expressed as a code in matrix representation  with respect to the basis $\beta$. Throughout the rest of this paper, all rank  codes being considered are in vector representation.

Another difference between rank codes and codes in the Hamming metric is the definition of the support of a codeword.

\begin{definition}\label{def3}
Let $\bold{x} = (x_1, x_2, \ldots, x_n) \in \mathbb{F}^n_{q^N}$ be a vector of rank $r$. We denote by $E$ the $\mathbb{F}_q$-sub vector space of $\mathbb{F}_{q^N}$ generated by the entries of $\bold{x}$, i.e.,  $E= \left< x_1, x_2, \ldots, x_n \right>$. The vector space $E$
is called the $support$ of $\bold{x}$.
\end{definition}

\subsection{Gabidulin codes}
Gabidulin codes are rank metric codes introduced in 1985 \cite{Gab}. They are a well-studied class of rank metric codes used for many applications. They are constructed from a set of linearly independent elements of $\mathbb{F}_{q^m}$ used to form a Moore matrix.

Let  $[i] := q^i$, the $i$th Frobenius power. Gabidulin codes are formally defined as follows.
\begin{definition}
A matrix $G \in \mathbb{F}^{k\times n}_{q^m}$ is called a Moore matrix if there exists an element $\mathbf{g} = (g_1, \ldots, g_n) \in \mathbb{F}^n_{q^m}$ such that $G = (g^{[i-1]}_j)_{i,j}$  for $1 \leq i \leq k$ and $1 \leq j \leq n$. If $\mathbf{g} $ has rank $n$, then the $[n,k]$-Gabidulin code $Gab(\mathbf{g})$ over $\mathbb{F}_{q^m}$  of dimension $k$ with generator vector $\mathbf{g}$ is the code generated by the matrix $G$.
\end{definition}
That is, if $\mathbf{g} = \left( g_1, g_2, \ldots, g_n \right)$ where $g_1, g_2, \ldots, g_n$ are linearly independent, then $Gab(\mathbf{g})$ has the following  generator matrix
{\small $$G = \left[ \begin{array}{cccc} g_1 & g_2 & \ldots & g_n \\
g_1^{[1]} & g_2^{[1]} & \ldots & g_n^{[1]} \\
\vdots & \vdots & \vdots & \vdots \\
g_1^{[k-1]} & g_2^{[k-1]} & \ldots & g_n^{[k-1]}
\end{array} \right]. $$}

Gabidulin codes have an efficient decoding algorithm that has no probability of failure. This, together with the simplicity of the structure of the generator matrix, makes it an attractive candidate for cryptographic use. However, the structure of the generator matrix also makes it vulnerable to attacks. The McEliece variant using Gabidulin codes was completely attacked by Overbeck\cite{Overbeck}.

\section{The McNie Public Key Encryption}

McNie is one of the several code-based public key encryption schemes submitted to the NIST post-quantum standardization \cite{NIST}. It claims to be resistant against known structural attacks by employing a randomly generated matrix as part of the public key and also to mask the secret key. Any code with an efficient decoding algorithm using the parity check matrix can be used for McNie.
In \cite{pqc}, a class of rank metric codes called quasi-cyclic low rank parity check (LRPC) codes are used in McNie. These codes are often used in cryptographic applications because of their simplicity and relatively no structure. However, one drawback is that there is a non-zero probability that LRPC decoding will fail. While it is not a big issue in cryptography as long as the failure probability is negligible, it is still very much desirable to design a cryptosystem that gives no decryption failure.

The general key generation, encryption and decryption steps for the McNie public key encryption are described as follows.

\noindent\fbox{ \parbox{\textwidth}{
\noindent {\centerline{\textbf{McNie Cryptosystem} }}
\medskip

\noindent \textbf{Key Generation}

Generate a random $l \times n$ generator matrix $G^\prime$ for a code over over $\mathbb{F}_{q^m}$

Generate the parity check matrix $H$ of an $[n,k]$ linear code over $\mathbb{F}_{q^m}$ with an efficient decoding algorithm $\Phi_H$ which can correct errors of (Hamming or rank) weight up to $r$.

Construct random $n \times n$ permutation matrix $P$ and $(n-k) \times (n-k)$ invertible matrix $S$ over $\mathbb{F}_{q^m}$.

Let $F = G^\prime P^{-1}H^TS$.
\begin{itemize}
\item Secret key: $(P,H,S,\Phi_H)$
\item Public key: $(G^\prime,F)$
\end{itemize}
\bigskip

\noindent \textbf{Encryption}

The sender generates a random vector $\mathbf{e}$ of weight $r$.  A message $\mathbf{m}$ is then encrypted as $Enc(\mathbf{m})=(\mathbf{c_1},\mathbf{c_2})$ where
\begin{eqnarray*}
 \mathbf{c_1} & = & \mathbf{m}G^\prime + \mathbf{e} \\
 \mathbf{c_2} & = & \mathbf{m}F
 \end{eqnarray*}
\bigskip

\noindent \textbf{Decryption}

Let the received ciphertext be $\mathbf{y} = (\mathbf{c_1},\mathbf{c_2})$. Compute $$c_1P^{-1}H^T - c_2 S^{-1} = \mathbf{e}P^{-1}H^T.$$ Apply the decryption algorithm $\Phi_H$ to obtain $\mathbf{e}P^{-1}$ and multiply by $P$ to get the error vector $\mathbf{e}$.

Finally, $\mathbf{m}$ is recovered by solving the system $\mathbf{m}G^\prime = \mathbf{c_1}-\mathbf{e}$.
} }
%\pagebreak

\subsection{Gaborit's message recovery attack}\label{subsec:GabAttack}
There is a  message recovery attack proposed by Gaborit \cite{PQCcomment} on the McNie cryptosystem that significantly reduced the security of the original suggested parameters. Notice that if $\mathbf{m} = (m_1, m_2, \ldots, m_l)$ and $F$ is of full rank, then we obtain $n-k$ linear equations of the $m_i$'s from $\mathbf{c_2} = \mathbf{m}F$. Hence, all the coordinates $m_i$'s can be expressed in terms of some fixed $l-(n-k)$ coordinates.
We can then rewrite $\mathbf{c_1}$ as $\mathbf{c_1} = \mathbf{m^\prime} G^{\prime\prime} + \mathbf{e}$ where $G^{\prime\prime}$ is of dimension $l-(n-k)$. So an attacker can use general (Hamming \cite{BothMay} or rank \cite{newGRS}) syndrome decoding on a code of dimension $l-(n-k)$ instead of a code of dimension $l$.

%%%%%%%%%%%%%%%%%%%%%%%%%%%%%%%%%%%%%%%%%%%%%%%%%%%%%%%%%%%%%%%%%%%%%%%%%%%%%%%%%%%%%%%%%%%%%%%%%%%%%%%%%%%%%%%%%%%%%%%%%%%%%%%%%%%%%%%%%%%%%%%%%%%%%%%%%%%%%%%%%%%%%%%%%%%%%%%%%%%%%%%%%%%%%%%%%
\section{McNie2 Public Key Encryption using Gabidulin code}\label{sec:McNie2-Gab}

\noindent\fbox{ \parbox{\textwidth}{
\noindent{\centerline{\textbf{McNie2-Gabidulin Cryptosystem} }}
\medskip

\noindent \textbf{Key Generation}

Generate a random vector $\mathbf{u} \in \mathbb{F}_{q^m}^n$ and generate the $l\times n$-partial circulant matrix $G^\prime$ from $\mathbf{u}$.

Let $H$ be a parity check matrix for a $[2n-k,n]$ Gabidulin code $C=Gab(\mathbf{g})$ over $\mathbb{F}_{q^m}$ generated by $\mathbf{g}$ such that $H = \left[ \begin{array}{c|c} H_1 & H_2 \end{array} \right] $ where $H_2$ is an $(n-k) \times (n-k)$ invertible matrix. Let $\Phi_H$ be an efficient decoding algorithm for $C$ using $H$, which can correct errors of weight  up to $r=\left\lfloor \frac{n-k}{2} \right\rfloor$.

Generate random $n \times n$ permutation matrix $P$.

Compute $F = G^\prime P^{-1} H_1^T (H_2^T)^{-1}$.

\begin{itemize}
\item Public Key: $(G^\prime,F)$
\item Secret Key: $(P,H,\Phi_H)$
\end{itemize}
\bigskip

\noindent \textbf{Encryption}

Generate random vectors $\mathbf{e_1} \in \mathbb{F}_{q^m}^n$ and $\mathbf{e_2} = \mathbb{F}_{q^m}^{n-k}$ such that $\mathbf{e} = (\mathbf{e_1},\mathbf{e_2})$ has weight $r$.  Compute
\begin{eqnarray*}
\mathbf{c_1} & = & \mathbf{m}G^\prime + \mathbf{e_1} \\
\mathbf{c_2} & = & \mathbf{m}F + \mathbf{e_2}.
\end{eqnarray*}
The message $\mathbf{m} \in \mathbb{F}_{q^m}^{l}$ is encrypted as $Enc(\mathbf{m}) = (\mathbf{c_1},\mathbf{c_2}).$
\bigskip

\noindent \textbf{Decryption}

Suppose the vector $\mathbf{y} = (\mathbf{c_1},\mathbf{c_2})$ is received. Compute
\begin{eqnarray*}
 \mathbf{c_1}P^{-1} H_1^T - \mathbf{c_2} H_2^T & = & \mathbf{m_1}G^\prime P^{-1} H_1^T + \mathbf{e_1} P^{-1} H_1^T - \mathbf{m} G^\prime P^{-1} H_1^T (H_2^T)^{-1} H_2^T  \\
& & - \mathbf{e_2} H_2^T \\
& = & \mathbf{e_1} P^{-1} H_1^T  - \mathbf{e_2} H_2^T \\
& = & (\mathbf{e_1} P^{-1} , -\mathbf{e_2}) \left[ \begin{array}{c} H_1^T \\ H_2^T \end{array} \right] \\
& = &  \mathbf{e^\prime} H^T
\end{eqnarray*}

Since $\mathbf{e^\prime} = (\mathbf{e_1} P^{-1} , -\mathbf{e_2}) $ is of weight $r$, the decoding algorithm $\Phi_H$ can be applied to obtain $(\mathbf{e_1^\prime},-\mathbf{e_2})$.

Apply the permutation $P$ to $\mathbf{e_1^\prime}=\mathbf{e_1}P^{-1}$ to obtain $\mathbf{e_1}$.

Finally, solve the system $\mathbf{m} G^\prime = \mathbf{c_1} - \mathbf{e_1}$ to recover $\mathbf{m}$.
}}

To avoid the attack mentioned on the previous section, we slightly modify the encryption algorithm by introducing an error $\mathbf{e_2}$ on $\mathbf{c_2}$. As a consequence, the decryption algorithm is also slightly modified. Furthermore, $H$ which is one of the secret keys is of the form $H = \left[ \begin{array}{c|c} H_1 & H_2 \end{array} \right] $ where $H_2$ is an $(n-k) \times (n-k)$ invertible matrix. Therefore, this algorithm includes Dual-Ouroboros~\cite{DualOuroboros} where $H=[H_1~|~ I_{n-k}]$ with $I_{n-k}$ is the identity matrix of order $n-k$.
We call this modified cryptosystem {\em McNie2}. Moreover, we employ Gabidulin codes to eliminate the decoding failure, as Gabidulin codes have a known deterministic decoding algorithm. We call this version McNie2-Gabidulin. The detailed key generation, encryption and decryption steps are given.
\bigskip

%In the scheme described above, we can think of $H_2^{-1}$ to be the invertible matrix $S$ in McNie. By taking $\mathbf{e_2} = \mathbf{0}$, McNie2 becomes McNie.
\subsection{Security reduction}
Indistinguishability under chosen plaintext attack (IND-CPA) is usually defined by a security game wherein an adversary $\mathcal{A}$ chooses two plaintexts $\mathbf{m}_0$ and $\mathbf{m}_1$ and sends them to the challenger who chooses $b \in \{0,1\}$ and
encrypts $\mathbf{m}_b$ into ciphertext $\mathbf{c}$ then returns $\mathbf{c}$ to $\mathcal{A}$. The adversary wins if $\mathcal{A}$ outputs $b^\prime = b$.
The advantage of an adversary $\mathcal{A}$ is defined as $\text{Adv}_\mathcal{A}^{\text{IND-CPA}}(\lambda) = | Pr[b^\prime =b]-\frac{1}{2}|$.
A public-key encryption scheme is $(t,\epsilon)$-IND-CPA secure if for any probabilistic $t$-polynomial time adversary $\mathcal{A}$, we have $\text{Adv}_\mathcal{A}^{\text{IND-CPA}}(\lambda) < \epsilon$.

In order to prove IND-CPA security of our proposed scheme, first consider the following problems.
\medskip

\noindent{\textbf{Problem 1.}}
Given an $l \times n-k$ matrix $F$ and a full rank $l \times n$ matrix $G^\prime$, find a permutation matrix $P$ and a parity matrix $H = [H_1 | H_2]$ for a Gabidulin code such that $F = G^\prime P^{-1} H_1^T(H_2^T)^{-1}$.
\medskip

\noindent{\textbf{Problem 2.}} Rank Syndrome Decoding (RSD)~\cite{newGRS}

Let $H$ be an $(n-k) \times n$ matrix over $\mathbb{F}_{q^m}$ with $k\leq n$, $\mathbf{s} \in \mathbb{F}^{n-k}_{q^m}$ and $r$ an integer. Find $\mathbf{x} \in \mathbb{F}_{q^m}^n$ such that the rank weight of $\mathbf{x} = r$ and $H\mathbf{x}^T = \mathbf{s}$.

\medskip

The first problem is a form of a matrix factorization problem. Problem 2 on the other hand is the rank metric version of the syndrome decoding (SD) problem. The RSD problem is proven hard in \cite{RSD} by a probabilistic reduction to the SD problem, which is proven NP-hard \cite{VT}.
%\noindent{\textbf{Problem 2.}} Rank Support Learning \cite{RSL}

%Given the generator matrix $G$ of a random $[n,l]$ code over $\mathbb{F}_{q^m}$ and $F = G H^T$, where $H$ is the parity check-matrix of an LRPC code, find the matrix $H$.
%\smallskip

%\noindent{\bf Remark:} In the case of quasi-cyclic LRPC codes, the RSL problem becomes the Quasi-Cyclic Rank Syndrome Decoding problem~\cite{OuroborosR}.

\begin{theorem}
The McNie2-Gabidulin PKE is IND-CPA secure under the assumption of Problems 1 and 2.
\end{theorem}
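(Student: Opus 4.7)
The plan is to proceed by a standard game-hopping argument, showing that under the two hardness assumptions the adversary's view in the IND-CPA experiment is computationally indistinguishable from a view that is independent of the challenge bit $b$. Three games suffice.

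In Game 0, the real IND-CPA experiment, the challenger honestly generates $(G', F)$ as in Key Generation, receives $\mathbf{m}_0, \mathbf{m}_1$ from $\mathcal{A}$, samples $b \in \{0,1\}$, draws a rank-$r$ error $\mathbf{e} = (\mathbf{e}_1, \mathbf{e}_2)$, and returns $(\mathbf{m}_b G' + \mathbf{e}_1,\; \mathbf{m}_b F + \mathbf{e}_2)$. In Game 1 I would replace the structured matrix $F = G' P^{-1} H_1^T (H_2^T)^{-1}$ by a uniformly random matrix $\widetilde{F}$ of the same dimensions. A distinguisher between Games 0 and 1 would, given $(G', X)$, decide whether $X$ has the McNie2-Gabidulin structure or is uniform; this is the decisional analogue of Problem 1, and the hop is bounded by its advantage.

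In Game 2 I exploit the fact that the challenge ciphertext now has the form $\mathbf{m}_b\,[G' \mid \widetilde{F}] + (\mathbf{e}_1,\mathbf{e}_2)$, where $[G' \mid \widetilde{F}]$ is a uniformly random $l \times (2n-k)$ matrix and the error has rank weight $r$. I replace this ciphertext by a uniformly random vector in $\mathbb{F}_{q^m}^{2n-k}$. A textbook reduction turns any distinguisher between Games 1 and 2 into an algorithm for the decisional form of Problem 2 (rank syndrome decoding) with parity check dual to $[G' \mid \widetilde{F}]$. In Game 2 the challenge is independent of $b$, so the adversary's advantage is exactly zero, and summing the two hop-losses gives the claimed IND-CPA bound.

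The main obstacle is aligning the two assumptions, which are stated as search problems, with the decisional statements that the hybrid argument actually consumes. For Problem 2 this is routine: the standard search-to-decision reductions for rank syndrome decoding used throughout rank-metric cryptography apply directly. For Problem 1, however, no off-the-shelf reduction is available; one must argue that the scheme-specific distribution of $F = G' P^{-1} H_1^T (H_2^T)^{-1}$, with $H = [H_1 \mid H_2]$ a hidden Gabidulin parity check, is computationally indistinguishable from uniform. This ``pseudorandomness of the public key'' is essentially the decisional reading of Problem 1 and is the most delicate step, since it must implicitly rule out that the residual Gabidulin/permutation structure carried by $F$ leaves any efficiently detectable trace (in particular, any trace that an Overbeck-style distinguisher could exploit). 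Tracking the reduction constants across the two hops and making this first step rigorous is the technical work needed to convert the plan into a formal proof.
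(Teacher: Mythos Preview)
Your proposal is correct and follows essentially the same three-game hybrid as the paper: an honest run, then replace $F$ by a uniformly random matrix under Problem~1, then replace the challenge ciphertext by a uniformly random vector under Problem~2, and conclude since the final game is independent of $b$. You are in fact more careful than the paper in flagging that the hops consume \emph{decisional} variants of the stated search problems; one minor slip is your claim that $[G'\mid\widetilde F]$ is uniformly random---$G'$ is partial circulant, so only $\widetilde F$ is uniform---but the paper's own reduction to RSD is equally loose on this point.
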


\begin{proof} First, notice that and adversary $\mathcal{A}$ breaks the scheme if he recovers the message $\mathbf{m}$ from the public keys $G^\prime$ and $F$ and the ciphertexts $\mathbf{c_1} = \mathbf{m}G^\prime + \mathbf{e_1}$ and $c_2 = \mathbf{m} F  + \mathbf{e_2}$ or is able to obtain the secret keys. These are instances of Problems 2 and 1, respectively.

We proceed with a series of games starting from an honest run of the scheme to the case when the ciphertext and the keys are random. Let $\mathcal{A}$ be a probabilistic polynomial time adversary to our scheme and consider the following games.

\begin{itemize}

\item[{$G_0$:}] This game corresponds to an honest run of the scheme.

 If $\mathcal{W}_0$ is the event that $\mathcal{A}$ wins Game $G_0$, then $\text{Adv}_\mathcal{A}^{\text{IND-CPA}}(\lambda) = | Pr[\mathcal{W}_0]-\frac{1}{2}|$.

\item[{$G_1$:}] In this game we replace the matrices $H_1$ and $H_2$ by a random matrices $\mathcal{H}_1$ and $\mathcal{H}_2$ of the same sizes. This results to a random $F$.

 If $\mathcal{W}_1$ is the event that $\mathcal{A}$ wins Game $G_1$, then, under the assumption of Problem 1, the two games $G_0$ and $G_1$ are indistinguishable with $|Pr[\mathcal{W}_1] -Pr[\mathcal{W}_0]| < \epsilon_1$.

\item[{$G_2$:}] In this game, we modify the previous game by picking random vectors $(\mathbf{s_1},\mathbf{s_2})$ to replace $(\mathbf{c_1},\mathbf{c_2}) $.

 The adversary knows $$\left[ \begin{array}{c} \mathbf{s_1}^T \\ \mathbf{s_2}^T \end{array} \right] = \left[ \begin{array}{cc} G^\prime & F \\ I & 0 \\ 0 & I \end{array} \right] \left[\begin{array}{ccc} \mathbf{m} & \mathbf{e_1} & \mathbf{e_2} \end{array} \right]$$

\end{itemize}

 We denote by $\mathcal{W}_2$ the event that $\mathcal{A}$ wins in Game $G_2$. Since the syndrome $\mathbf{s} = \left[ \begin{array}{c} \mathbf{s_1}^T \\ \mathbf{s_2}^T \end{array} \right] $ is random,  under the RSD (Problem 2) assumption, Games $G_2$ and $G_1$ are indistinguishable with $|Pr[\mathcal{W}_2] -Pr[\mathcal{W}_1]| < \epsilon_2$.

Now, since the ciphertext challenge is random, any adversary $\mathcal{A}$ has no advantage, therefore $Pr[\mathcal{W}_2] = \frac{1}{2}$. Therefore, we have
\begin{eqnarray*}
\text{Adv}^{\text{IND-CPA}}_\mathcal{A} (\lambda) &=& |Pr[\mathcal{W}_0] - \frac{1}{2}| = \Pr[\mathcal{W}_0] - Pr[\mathcal{W}_2]| \\
& = & |Pr[\mathcal{W}_0] - Pr[\mathcal{W}_1]| + |Pr[\mathcal{W}_1] - Pr[\mathcal{W}_2]| \\
& < & \epsilon_1 + \epsilon_2 .
\end{eqnarray*}

This shows that under the assumptions of Problem 1 and 2, McNie2-Gabidulin is IND-CPA secure.
%\item[$(ii)$] Recovering the low rank parity check $H$ is enough to break the system and so given $F=GH$ and $G$, finding the matrix $H$ with low rank is an instance of the rank support learning problem.
%\end{itemize}
\end{proof}

\subsection{Practical Security}

We discuss some known attacks and how they affect the security of McNie2-Gabidulin.

\begin{enumerate}

\item \textbf{Gaborit's Attack.} This attack mentioned in Section 3.1 can be avoided in our proposed scheme because of the error $\mathbf{e_2}$ added to the ciphertext $\mathbf{c_2}$. This way, it is not possible to rewrite $\mathbf{c_1}$ in the form mentioned in Section 3.1.

\item \textbf{Key recovery attack.} Lau, et al. \cite{McNieKRA} proposed a key recovery attack to the McNie cryptosystem whenever the matrix $P$ is taken to be the identity matrix. In order to avoid this attack, we simply restrict take $P$ to be non-identity matrix so that $F = G^\prime P^{-1}H_1^T(H_2^T)^{-1}$ is a cubic multivariate system of
equations and thus have a  large solving complexity. Moreover, $H$ is not a circulant LRPC matrix and therefore this attack fails in our scheme.

\item \textbf{Overbeck's attack.} In Overbeck's attack \cite{Overbeck} on $G_{pub} = S(X | G)P$ where $S \in GL_k(\mathbb{F}_{q^m})$,  $X \in \mathbb{F}^{k\times t_1}_{q^m}$, $P \in GL_{n+t_1} (\mathbb{F}_q)$, and $G$ is a $k\times n$ Moore matrix.
Consider the code $\bar{C}$ generated by $\bar{G}^T =  (G^T_{pub}, \ldots, (G^{[n-k-1]}_{pub} )^T)$, since $G^{[i]}_{pub} = S^{[i]}(X^{[i]} | G^{[i]})P$, then $ker(\bar{C}) = n + t_1 -1$. An alternative column scrambler matrix $\bar{P}$ over $\mathbb{F}_q$ could be computed, giving $G_{pub}\bar{P}^{-1} = S(Z |G^*)$
where $G^*$ is a Moore matrix. %In our construction, $FT = STHPT (G0)T and
%(FT )[i] = (ST )[i]H[i](PT )[i]((G0)[i])T , the matrix (PT )[i]((G0)[i])T is not over
%Fq, thus we are not able to determine an alternative column scrambler matrix
% P over Fq so that FT  P􀀀1 = STH, then Overbeck's attacks fails.
It is easy to see that in our proposal, $F= G^\prime P^{-1} H_1^T (H_2^T)^{-1}$. The presence of $G^\prime$ gives an additional scrambling effect to the matrix and thus Overbeck's attack fails.

\item \textbf{Direct attack on the message.} An attacker can try to obtain the message $\mathbf{m}$ by directly attacking the ciphertext. In the rank metric, these attacks usually deploy the support of a codeword  and apply a rank metric version of the Information Set Decoding \cite{Bec}. Recent improvement of this attack is given in  \cite{newGRS}. This uses the $\mathbb{F}_{q^m}$-linearity of the code to improve complexity. The best strategy has complexity $(n-k)^3m^3q^{r\frac{(k+1)m}{n}-m}$.
\end{enumerate}

%The security of the new system is based on the following problems.
%\bigskip

%\noindent{\textbf{Problem.}} Syndrome Decoding (SD)

%Let $H$ be a $(n-k) \times n$ matrix over $\mathbb{F}_{q^m}$ with $k\leq n$, $\mathbf{s} \in \mathbb{F}^{n-k}_{q^m}$ and $r$ an integer. Find $\mathbf{x} \in \mathbb{F}_{q^m}^n$ such that  $wt(\mathbf{x}) = r$ and $H\mathbf{x}^T = \mathbf{s}$.
%\bigskip

%\noindent{\textbf{Problem}}

%Given the matrices $G^\prime$ and $F = G^\prime H_1 H_2^{-1}$, where $H_2$ is invertible, find $H_1$ and $H_2$.
%\bigskip

%This problem is a combination of the Rank Syndrome Learning (RSL) problem given in \cite{OuroborosR} and two-matrix factorization problem.

\section{Suggested parameters}
We present in Table 1 our suggested parameters for McNie2-Gabidulin for different security levels based on recent known attacks on the RSD problem \cite{newGRS}. The public key size, denoted PK, is $\frac{(n+l(n-k))m}{8}\log_2(q)$ bytes while the secret key size, denoted SK, is $\frac{(n+(2n-k))m}{8}\log_2(q)$ bytes. The ciphertext size, denoted CT, is $\frac{(n+(n-k))m}{8}\log_2(q)$ bytes. Sec in Table 1 denotes the security level in bits.

\begin{table} [h] \label{Table:param}
\caption{Suggested parameters for McNie2-Gabidulin}
\centering
{\footnotesize
\begin{tabular}{|c|c|c|c|c|c|c|c|c|c|}
\hline
Sec & $n$ & $k$ & $l$ & $q$ & $m$ & $r$ & PK & SK & CT\tabularnewline
\hline
\hline
128 & 24 & 12 & 22 & 2 & 41 & 6 & 1.476KB & 0.308KB & 0.185KB\tabularnewline
\hline
192 & 32 & 16 & 24 & 2 & 53 & 8 & 2.756KB & 0.530KB & 0.318KB\tabularnewline
\hline
256 & 36 & 18 & 29 & 2 & 59 & 9 & 4.116KB & 0.664KB & 0.399KB\tabularnewline
\hline
\end{tabular} }
\end{table}

Table 2 shows the key sizes for the suggested parameters for a version of McNie using Gabidulin codes given in \cite{McNieKRA}. By comparing  from Table 1, we can see that McNie2-Gabidulin offers smaller public and private key sizes. This is because McNie2-Gabidulin is resistant to Gaborit's attack in Section \ref{subsec:GabAttack} while McNie-Gabidulin\cite{McNieKRA} needs to increase parameters to avoid  Gaborit's attack. By using McNie2-Gabidulin, we have a gain of around 400 bytes for the 128 bit and 192 bit security levels, and around 600 bytes for the 256 bit security.

\begin{table}[h]
\caption{Suggested parameters for McNie using Gabidulin codes \cite{McNieKRA}}
\centering
\begin{tabular}{|c|c|c|c|c|c|c|c|c|c|c|}
\hline
Sec & $m$ & $n$ & $k$ & $l$ & $t_{1}$ & $t_{2}$ & $q$ & PK & SK & CT\tabularnewline
\hline
\hline
128 & 43 & 38 & 14 & 37 & 9 & 3 & 2 & 1.88KB & 0.98KB & 0.33KB\tabularnewline
\hline
129 & 44 & 40 & 14 & 38 & 10 & 3 & 2 & 1.94KB & 1.07KB & 0.36KB\tabularnewline
\hline
192 & 50 & 45 & 19 & 44 & 10 & 3 & 2 & 3.21KB & 1.36KB & 0.44KB\tabularnewline
\hline
198 & 52 & 47 & 19 & 45 & 11 & 3 & 2 & 3.40KB & 1.48KB & 0.49KB\tabularnewline
\hline
257 & 57 & 52 & 20 & 51 & 13 & 3 & 2 & 4.70KB & 1.80KB & 0.60KB\tabularnewline
\hline
257 & 59 & 54 & 22 & 51 & 13 & 3 & 2 & 4.88KB & 1.94KB & 0.63KB\tabularnewline
\hline
\end{tabular}

\end{table}

Moreover, in Table 3, for every given bit security level, we give the public key sizes, in kilobytes, for different code-based public key cryptosystems with no decryption failure. The values in column 3 are for the McNie version using Gabidulin codes given in \cite{McNieKRA} (given in Table 2). Values appearing in the last two columns are from variants of the McEliece PKE using QD-Goppa \cite{Miso} and Goppa codes \cite{McE}, respectively.

\begin{table} [h]
\caption{Public key sizes (in kilobytes) of cryptosystems with no decoding failure probability}
\centering
{\footnotesize
\begin{tabular}{|c|c|c|c|c|}
\hline
Security & McNie2- & McNie- & \multicolumn{2}{c|}{McEliece}\tabularnewline
\cline{4-5}
(bit) & Gabidulin & Gabidulin & QD-Goppa & Goppa\tabularnewline
\hline
\hline
128 & {\bf 1.476} & 1.88 & 4.096 & 192.192\tabularnewline
\hline
192 & {\bf 2.756} & 3.21 & 5.632 & -\tabularnewline
\hline
256 & {\bf 4.116}  & 4.70 & 8.192 & 958.482\tabularnewline
\hline
\end{tabular}
}
\end{table}

The above table shows our proposed cryptosystem has the lowest public key sizes. Although these key sizes are relatively higher than those of cryptosystems using LRPC codes (e.g. \cite{OuroborosR,Ouroboros}), the codes have a probabilistic decoding and thus cryptosystems based on LRPC codes have a non-zero decryption failure probability. This gives McNie2-Gabidulin an advantage over other code-based cryptosystems.

% Note that Dual-Ouroboros \cite{DualOuroboros} is another example of utilization of the McNie 2 structure, which is based on LRPC codes. Surprisingly, McNie2-LRPC turns out to be a dual version of Ouroboros-R, another candidate submitted to NIST Post-Quantum Cryptography standardization. Because Dual-Ouroboros is based on LRPC codes, it has a small key size like Ouroboros-R at the cost of sacrificing non-zero probability of decoding failure.

\section{Conclusion}
We proposed McNie2-Gabidulin, a modification of the McNie public key encryption scheme that avoids the message recovery attack by Gaborit. We also address one key issue in the design of McNie, the decoding failure probability, by using Gabidulin codes. We showed that although the McEliece cryptosystem using Gabidulin codes are already proven insecure, its use in the McNie2 cryptosystem avoids known attacks and is hence secure. We also obtain relatively low key sizes of a few kilobytes.  In fact, McNie2-Gabidulin has the lowest key size among code-based public key cryptosystems with no decryption error. We believe that this makes our proposed scheme a promising candidate for post-quantum cryptography. %However, there are not much study done on this cryptosystem so very little can be said about its practical security. Therefore, its more cryptanalysis of this scheme is expected in the future.

%%%%%%%%%%%%%%%%%%%%%%%%%%%%%%%%%%%%%%%%%%%%%%%%%%%%%%%%%%%%%%%%%%%%%%%%%%%%%%%%%%%%%%%%%%%%%%%%%%%%%%%%%%%%%%%%%%%%%%%%%%%%%%%%%%%%%%%%%%%%%%%%%%%%%%%%%%%%%%%%%%%%%%%%%%%%%%%%%%%%%%%%%%%%%%%%%%%%

\bigskip


\begin{thebibliography}{99}

\bibitem{OuroborosR} C. Aguilar Melchor, N. Aragon, S. Bettaieb, L. Bidoux, O. Blazy, J.C. Deneuville, P. Gaborit, A. Hauteville, and G.  Z\'{e}mor, Ouroboros-R. http://pqc-ouroborosr.org/

\bibitem{newGRS} N. Aragon, P. Gaborit, A.  Hauteville and J. P. Tillich, Improvement of the generic attacks for the rank syndrome decoding problem, (2017), $\left<\text{hal-01608464}\right>$

\bibitem{Bec} A. Becker, A. Joux, A. May, and A. Meurer, Decoding random binary linear codes in 2n/20: How 1 + 1 = 0 improves information set decoding, In \textit{Annual International Conference on the Theory and Applications of Cryptographic Techniques}, Springer, Berlin, Heidelberg, 520-536, (2012)

\bibitem{VT} E. Berlekamp, R. McEliece, and H. van Tilborg, On the inherent intractability of certain coding problems (Corresp.), \textit{IEEE Transactions on Information Theory}, \textbf{24}(3), 384-386, (1978)

\bibitem{BothMay} L. Both, and A. May, Decoding linear codes with high error rate and its impact for LPN security, In \textit{Post-Quantum Cryptography, PQCrypto 2018}, (eds. T. Lange and R. Steinwandt), \textit{Lecture Notes in Computer Science}, \textbf{10786}, Springer, Cham., (2018)

\bibitem{Del} P.H. Delsarte, Bilinear forms over a finite field, with applications to coding theory. \textit{Journal of Combinatorial Theory Series A}, \textbf{25}, 226-241, (1978)

\bibitem{Ouroboros} J.C. Deneuville, P. Gaborit, and  G. Z\'{e}mor, Ouroboros: A simple, secure and efficient key exchange protocol based on coding theory, \textit{International Workshop on Post-Quantum Cryptography}, Springer, Cham, (2017)

\bibitem{pqc} McNie and other cryptosystems, https://csrc.nist.gov/Projects/Post-Quantum-Cryptography/Post-Quantum-Cryptography-Standardization

\bibitem{PQCcomment} https://csrc.nist.gov/Projects/Post-Quantum-Cryptography/Round-1-Submissions, read official comments on McNie dated Dec 24, 2017 and Dec. 26, 2017.

\bibitem{Gab} E.M. Gabidulin, Theory of codes with maximum rank distance, \textit{Problemy Peredachi Informatsii}, \textbf{21}(1), 3-16, (1985)

%\bibitem{RSL} P. Gaborit, A. Hauteville, D.H. Phan, and J.P. Tillich, Identity-based Encryption from Rank Metric, In \textit{CRYPTO 2017}, \textit{Lecture Notes in Computer Science}, \textbf{10403}, Springer, (2017)

\bibitem{DualOuroboros} P. Gaborit, L. Galvez, A. Hauteville, J.L. Kim, M.J. Kim, Y.S. Kim, Dual-Ouroboros: an improvement of the McNie scheme, preprint, {\em submitted.}

\bibitem{RSD} P. Gaborit and G. Zemor, On the hardness of the decoding and the minimum distance problems for rank codes, \textit{IEEE Trans. Information Theory}, \textbf{62}(12), 7245-7252, (2016)

\bibitem{McNieKRA} T.S.C.  Lau, and C.H. Tan, Key recovery attack on McNie based on Low Rank Parity Check codes and its reparation, \textit{IWSEC 2018: Advances in Information and Computer Security} pp 19--34.

\bibitem{McE} R.J. McEliece, A public key cryptosystem based on algebraic coding theory. \textit{DSN progress report}, \textbf{42-44}, 114-116, (1978)

\bibitem{Miso} R. Misoczki, and P.S. Barreto, Compact McEliece keys from Goppa codes. In \textit{Selected Areas in Cryptography}, 376-392, (2009)

\bibitem{Overbeck} R. Overbeck, Structural attacks for public key cryptosystems based on Gabidulin codes, \textit{J. Cryptology}, \textbf{21}(2), 280-301, (2008)

\bibitem{NIST} NIST Post-quantum cryptography standardization website https://csrc.nist.gov/Projects/Post-Quantum-Cryptography/Round-1-Submissions
%\bibitem{McNieKRA} Key Recovery Attack on McNie based on Low Rank Parity Check Codes and its Reparation
\end{thebibliography}
\end{document}